\newtheorem{theorem}{Theorem} 
\theoremstyle{definition}
\newtheorem{definition}[theorem]{Definition}
\newcommand{\namedref}[2]{\hyperref[#2]{#1~\ref*{#2}}}
\newcommand{\equationref}[1]{\hyperref[#1]{Eq~(\ref*{#1})}}
\newcommand{\theoremref}[1]{\hyperref[#1]{Theorem~\ref*{#1}}}
\newcommand{\lemmaref}[1]{\hyperref[#1]{Lemma~\ref*{#1}}}
\newcommand{\remarkref}[1]{\hyperref[#1]{Remark~\ref*{#1}}}
\newcommand{\local}{\ensuremath{\mathsf{LOCAL}}\xspace}
\newcommand{\congest}{\ensuremath{\mathsf{CONGEST}}\xspace}
\newcommand{\supported}{\ensuremath{\mathsf{SUPPORTED}}\xspace}
\newcommand{\cc}{\ensuremath{\mathsf{CC}}\xspace}
\newcommand{\rcc}{\ensuremath{\mathsf{RCC}}\xspace}
\title{Brief Announcement: Does Preprocessing Help under Congestion?}
\date{}
\author{Klaus-Tycho Foerster$^*$ \quad Janne H. Korhonen$^\dagger$ \quad Joel Rybicki$^\dagger$ \quad Stefan Schmid$^*$\\ \\
\small$^*$Faculty of Computer Science, University of Vienna, Austria\\ \small$^\dagger$Institute of Science and Technology Austria}
\begin{document}

\maketitle

\begin{abstract}
This paper investigates the power of preprocessing in the \congest model.
Schmid and Suomela (ACM HotSDN 2013) introduced the \supported \congest model to study
the application of distributed algorithms in Software-Defined Networks (SDNs). 
In this paper, we show that a large class of lower bounds in the \congest model still hold in the \supported model, highlighting the robustness of these bounds. This also raises the question how much does preprocessing help in the \congest model.
%
%
\end{abstract}


\section{Introduction}\label{sec:introduction}
Common models of distributed computation typically consider scenarios
where the computation always starts from \emph{scratch}, i.e., in an \emph{unknown} communication topology. 
However, in many practical scenarios, the communication 
topology does not change as frequently as the problem input. 
For example, the distributed algorithm may always be run in networks whose topology is known in advance, but the input instance may vary. In such cases, it is natural to 
\emph{support} distributed algorithms by allowing preprocessing of the underlying network topology~\cite{hotsdn13}.

With this in mind, Schmid and Suomela~\cite{hotsdn13} proposed two \supported models of distributed computation to enhance distributed algorithms with the power of preprocessing: the \supported \local and \supported \congest. Subsequently, Korhonen and Rybicki considered subgraph detection problems~\cite{KR17} in the \supported \congest model, whereas Foerster et~al.~\cite{Foerster2019} investigated the power of the \supported \local model. In this paper, we focus on  \supported \congest.

\paragraph{Contribution.} 
We observe that many lower bounds in the standard \congest model still hold under such preprocessing.
Given that intuitively preprocessing seems to be very powerful, this may come as a surprise. %
This raises the question of how much preprocessing actually helps in the \congest model. 
Indeed, it may be either that the power of preprocessing is very limited or that the current lower bounds for non-supported \congest are not tight. In the light of this, we propose the following challenge: 
\emph{is there a separation between supported and non-supported models}?
If the answer is no, then there may be a way to easily simulate preprocessing, and thus, simplify algorithm design in the \congest model. In the converse case, preprocessing may offer a practical way to accelerate current distributed algorithms. 

\paragraph{Model.} 
In the \supported \congest model, the communication topology is an undirected graph $H=(V,E)$ and each node has a unique identifier of size $O(\log n)$ bits. The logical state is given by an undirected subgraph, i.e., the \emph{input} graph $G \subseteq H$, which inherits the identifiers in $H$. The computation proceeds in two steps: First, in the preprocessing phase, the nodes may compute any function on $f(H)$ and store the result locally. In the second phase, the nodes are tasked to solve a problem instance on the input graph $G$ in the \congest model. To this end, the edges of $H$ may be used for communication and additionally the local outputs of the preprocessing. Note that the congested clique~\cite{DBLP:conf/spaa/LotkerPPP03} model
is a special case of the \supported model: the support $H$ is simply a clique. 
%
In addition, one may also restrict the communication of the \supported model to just the input graph $G$ after 
preprocessing; this model is called the \emph{passive} \supported model.


%

\section{\boldmath Lower Bounds for the \supported \congest Model}

We show that \congest lower bounds obtained using the now standard \emph{family of lower bound graphs} construction~\cite{DBLP:journals/corr/abs-1901-01630} easily translate to the \supported model. To this end, we adapt here the proof of Abboud et al.~\cite{DBLP:journals/corr/abs-1901-01630}. Using existing constructions for families of lower bound graphs then immediately give the lower bounds shown in Table~\ref{table}. We note that this technique does not directly cover the lower bounds of Das Sarma et al.~\cite{dassarma12}, though we believe they can be similarly translated. 

\paragraph{Two-party communication complexity.}
Let $f \colon \{0,1\}^{2k} \to \{0,1\}$ be a Boolean function. In~the two-party communication game on $f$, there are two players who receive a private $k$-bit string $x_0$ and $x_1$ as input, and the task is to have at least one of the players compute $f(x) = f(x_0, x_1)$. The \emph{deterministic communication complexity} $\cc(f)$ of a function $f$ is the maximum number of bits the two players need to exchange in the worst case (over all deterministic protocols and input strings) in order to compute $f(x_0,x_1)$. Similarly, the \emph{randomised communication complexity} $\rcc(f)$ is the worst-case complexity of protocols, which 
compute $f$ with probability at least $2/3$ on all inputs.

\begin{definition}
    Let $f_n \colon \{ 0, 1 \}^{2k(n)} \to \{ 0, 1 \}$ and $C \colon \mathbb{N} \to \mathbb{N}$ be functions and $\Pi$ a graph predicate. Suppose that there exists a constant $n_0$ such that for all $n > n_0$ and $x_0,x_1 \in \{0,1\}^{k(n)}$ there exists a (weighted) graph $G(n,x_0,x_1)$ satisfying the following properties:
\begin{compactenum}
    \item $G(n, x_0,x_1)$ satisfies $\Pi$ if and only if $f_n(x_0,x_1)=1$,
    \item $G(n, x_0,x_1) = (V_0 \cup V_1, E_0 \cup E_1 \cup S)$, where
        \begin{compactitem}
            \item[--] $V_0$ and $V_1$ are disjoint and $|V_0 \cup V_1| = n$,
            \item[--] $E_i \subseteq V_i \times V_i$ for $i \in \{0,1\}$,
            \item[--] $S \subseteq V_0 \times V_1$ is a cut and has size at least $C(n)$, and
            \item[--] the (weighted) subgraph $G_i = (V_i, E_i)$ only depends on $i$, $n$ and $x_i$, i.e., $G_i = G_i(n, x_i)$.
        \end{compactitem}
\end{compactenum}
    If $\mathcal{G}(n) = \{ G(n,x) \colon x \in \{0,1\}^{2k(n)} \}$, then $\mathcal{F} = (\mathcal{G}(n))_{n > n_0}$ is a \emph{family of lower bound graphs}.
    
\end{definition}

\begin{theorem}\label{lower-bounds}
    Let $\mathcal{F}$ be a family of lower bound graphs. Any algorithm deciding $\Pi$ on a graph family $\mathcal{H}$ containing $\bigcup \mathcal{G}(n)$ for all $n > n_0$ in the passive or active \supported \congest model with bandwidth $b(n)$ needs
    \[
    \Omega\left( \frac{\cc(f_n)}{C(n) b(n)} \right) \quad \textrm{ and } \quad \Omega\left( \frac{\rcc(f_n)}{C(n) b(n)} \right)
    \]
    deterministic and randomised rounds, respectively.
\end{theorem}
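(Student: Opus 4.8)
The plan is to run the standard simulation that turns a fast distributed algorithm into a cheap two-party protocol, together with the single extra observation that the preprocessing phase is free because the support graph carries no information about the players' inputs. Fix $n>n_0$ and set $H_n:=\bigcup\mathcal G(n)$; by hypothesis $H_n\in\mathcal H$, so it is enough to consider algorithms run with support graph $H_n$. I would begin by recording the structure of $H_n$: since $V_0$, $V_1$ and the cut $S$ do not depend on the inputs and each $G_i(n,x_i)$ lives on $V_i\times V_i$, we obtain $H_n=(V_0\cup V_1,\ E_0^\ast\cup E_1^\ast\cup S)$ with $E_i^\ast=\bigcup_{x_i}E_i(n,x_i)\subseteq V_i\times V_i$. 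Hence the subgraph of $H_n$ induced by $V_0$ depends only on $n$ and the fixed family $\mathcal F$, and likewise for $V_1$, for $S$, and for all identifiers. The point is that $H_n$ --- and therefore the entire output $f(H_n)$ of any preprocessing function --- is common knowledge of both players, obtained with no communication.

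Next I would set up the reduction. Suppose $\mathcal A$ decides $\Pi$ in the passive or active \supported \congest model on $\mathcal H$ with bandwidth $b(n)$ and round complexity $R(n)$. On inputs $x_0,x_1$, Alice assembles the subgraph $G_0(n,x_0)$ on $V_0$ plus the fixed cut $S$, Bob assembles $G_1(n,x_1)$ on $V_1$ plus $S$; together these form exactly $G(n,x_0,x_1)\subseteq H_n$, and each player knows the full input incident to its own vertices. Both players locally run the preprocessing $f(H_n)$ --- no communication, by the observation above --- and then simulate $\mathcal A$ round by round, with Alice holding the states of all vertices in $V_0$ and Bob those in $V_1$. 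A message sent along an edge inside $V_0$ is computable by Alice alone (she has both endpoints), and dually for $V_1$; the only messages a player cannot generate on its own are those traversing $S$. As $S$ has $C(n)$ edges and each carries at most $b(n)$ bits per direction per round, the players exchange at most $2\,C(n)\,b(n)$ bits per simulated round, hence at most $2\,R(n)\,C(n)\,b(n)$ over the whole run; this holds in both the passive and the active model, since the only $V_0$--$V_1$ edges of $H_n$ are those of $S$, and $S\subseteq E(G(n,x_0,x_1))$.

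When $\mathcal A$ halts, every vertex has output an accept/reject bit, and $G(n,x_0,x_1)$ satisfies $\Pi$ iff no vertex rejects; so Alice announces whether some vertex of $V_0$ rejects and Bob whether some vertex of $V_1$ does ($O(1)$ further bits), after which both players know whether $G(n,x_0,x_1)$ satisfies $\Pi$, which by the first defining property of the family equals $f_n(x_0,x_1)$. This is a deterministic protocol of cost $2R(n)C(n)b(n)+O(1)$ when $\mathcal A$ is deterministic, and --- letting each vertex draw its private coins at its owner --- a bounded-error protocol of the same cost when $\mathcal A$ is randomised with error below $1/3$. Rearranging $\cc(f_n)\le 2R(n)C(n)b(n)+O(1)$ and $\rcc(f_n)\le 2R(n)C(n)b(n)+O(1)$ and solving for $R(n)$ gives the two stated lower bounds.

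The one step that carries any real idea is the structural observation in the first paragraph: one has to check that $\bigcup\mathcal G(n)$ decomposes into an $x_0$-free part on $V_0$, an $x_1$-free part on $V_1$, and a fixed cut $S$, since this is exactly what makes ``the support is known'' the same common knowledge for both players and so makes the preprocessing cost nothing. Beyond that I expect no genuine obstacle: the remaining points --- that identifiers are inherited consistently into $G$, that a \congest node's state update depends only on the messages on its incident edges (so a vertex's owner can advance it given the cut messages, each of at most $b(n)$ bits), and that the argument is blind to the passive/active distinction because every $V_0$--$V_1$ edge used lies in $S$ --- are routine bookkeeping.
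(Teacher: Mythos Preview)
Your proposal is correct and follows essentially the same simulation argument as the paper: take $H=\bigcup\mathcal G(n)$, observe that $H$ reveals nothing about $x_{1-i}$ to player~$i$ so preprocessing is free, then have each player simulate its side of the cut and exchange only the $O(C(n)b(n))$ bits per round crossing~$S$. Your write-up is in fact a bit more careful than the paper's in spelling out the structural decomposition of $H_n$, the passive/active distinction, the output step, and the handling of randomness, but the underlying idea is identical.
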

\begin{proof}
    Suppose $\mathcal{A}$ is an algorithm that decides $\Pi$ on the graph family $\mathcal{H}$ in $T(n)$ communication rounds. We now construct a two-player protocol $\pi$ that computes $f_n(x_0, x_1)$ by simulating $\mathcal{A}$. Let $x_0,x_1 \in \{0,1\}^{k(n)}$ be the input and $G = G(n,x_0,x_1)$ and $H = \bigcup \mathcal{G}(n)$.
     
     Given its input $x_i$, player~$i$ can locally construct the subgraph $G_i(n,x_i) \subset G(n,x_0,x_1)$. Note that given $G_{i}(x_i)$, the support graph $H$ does not reveal any information about $E_{1-i}$ or $x_{1-i}$ to player~$i$, since for any $y \in \{0,1\}^{k(n)}$ we have $G_{1-i}(y) \subseteq H[V_{1-i}]$.
    Simulating any messages sent between vertices of $G_i(n,x)$ can be done without any communication with player $1-i$. Any messages from $V_{i}$ to $V_{1-i}$ must go across the cut $S$ and are communicated by player $i$ to player $1-i$. As in each round each player communicates at most $b(n)$ bits over any edge in $S$, the total amount of bits communicated during the course of $T$ rounds is at most $2b(n) |S(x)| T(n) \ge 2b(n)C(n)T(n)$,
    which must be at least $\cc(f_n)$ for deterministic algorithms and $\rcc(f_n)$ for randomised algorithms.
Thus, the claim follows by observing that $\frac{\cc(f_n)}{2b(n)C(n)} \le T(n)$ and $\frac{\rcc(f_n)}{2b(n)C(n)} \le T(n)$.
\end{proof}

\begin{table}[h]
\definecolor{lgray}{rgb}{0.9,0.9,0.9}

\begin{footnotesize}
\begin{center}
\begin{tabular*}{\linewidth}{@{}l@{\extracolsep{\fill}}l@{}l@{}}
Lower bound & Problem   \\
\midrule
\rowcolor{lgray}
$\Omega(n^{1/2}/\log n)$ & $4$-cycle~\cite{drucker13}, $2k$-cycle~\cite{KR17}, Girth ($(2 - \varepsilon)$-apx.)~\cite{frischknecht2012}\\
$\Omega(n/\log n)$ & $(2k+1)$-cycle~\cite{drucker13}, APSP, Diameter ($(3/2 - \varepsilon)$-apx.)~\cite{frischknecht2012}\\
\rowcolor{lgray}
$\Omega(n/(\log n)^2)$ & Diameter on sparse graphs~\cite{AbboudCK16} \\
$\Omega(n/(\log n)^3)$ & Diameter and radius ($(3/2 - \varepsilon)$-apx.), eccentricities ($(5/3 - \varepsilon)$-apx.), all on sparse graphs~\cite{AbboudCK16}\\
\rowcolor{lgray}
$\Omega(n^{2-1/k}/(k\log n))$ & Subgraph detection (for any $k$)~\cite{subgraph_spaa}\\
$\Omega(n^2/(\log n)^2)$ & Min.~vertex cover, max.~independent set, chrom.~number ($(4/3-\varepsilon)$-apx.), weighted 8-cycle~\cite{CHKP17}\\
\rowcolor{lgray}
$\Omega(n^2)$ & Identical subgraphs (det.\ only)~\cite{CHKP17}\\
\midrule
\end{tabular*}
\end{center}
\end{footnotesize}
\caption{Lower bounds that transfer from the \congest to the \supported \congest model.}\label{table}

\end{table}

\section{Towards New Algorithmic Opportunities?}

We saw in the last section that many lower bounds from the \congest model translate directly to the \supported \congest model, even though intuitively, the \supported model may seem significantly more powerful. 
This raises the question if the \supported model is actually a stronger model in a meaningful sense or if the prior lower bounds were so strong that they easily transferred.
\paragraph{First separation results.} Prior work on the \supported \local model~\cite{Foerster2019} already pointed out that computing an upper bound on the network size separates the \local ($\Omega(D)$ rounds) and the \supported \local model (0 rounds). 
Analogous results hold if the support graph is promised to have certain (monotone) properties that apply to all its subgraphs, e.g., being $k$-colorable.
These results directly carry over to the \supported \congest model, providing a $0$ vs $\Omega(D)$ round separation, even in an identifier-independent setting.
On the other hand, in the \supported \local model, all problems can be solved trivially in diameter time, but can the \supported \congest go further?
Observe that the na\"ive problem of collecting all identifiers does not provide an $\Omega(n^2)$ separation in the \supported \congest model, as the problem may only depend on the input graph, which may omit nodes present in the support graph.
Notwithstanding, we can alter the problem s.t.\ each node has to e.g.\ output two sets $I_0,I_1$ of identifiers with $|I_0|=|I_1|$, where $I_1$ contains 
a superset of all identifiers in $G$ and none of the identifiers in $I_0$ appear in $G$.
\paragraph{Open questions and possibilities.}
While the separation results for restricted graph classes can be directly used to accelerate many specialized algorithms (e.g., coloring when the support graph has a small chromatic number), we leave it as an open question how the \supported model can be leveraged outside the case of collecting identifiers and providing upper bounds on the graph size, even though the latter is sometimes needed as an input for some algorithms.
We believe that exciting possibilities arise, no matter the outcome to this open question.
For example, if the \congest model could simulate the \supported model with negligible overhead beyond the previously mentioned exceptions, the \supported model could greatly simplify algorithm design by incorporating preprocessing.
On the other hand, even a strong separation could lead to significantly faster algorithms in neighboring research areas, e.g.~for Software Defined Networks~\cite{hotsdn13}.


\bibliographystyle{clickable-doi-or-url}
\bibliography{support-congest}

\end{document}